\def \treq {\stackrel{\tiny \Delta}{=}}
\newcommand{\Q}{\ensuremath{\mathrm{Q}}}
\newcommand{\Qi}{\ensuremath{\mathrm Q}^{-1}}
\newcommand{\e}{\mathbf{e}}
\newcommand{\E}{\ensuremath{\mathbb E}}
\renewcommand{\Pr}{\ensuremath{\mathbb P}}
\newcommand{\R}{\ensuremath{\mathbb R}}
\newcommand{\CN}{\ensuremath{\mathcal {CN}}}
\newcommand{\bgamma}{\ensuremath{\bar{\gamma}}}
\newcommand{\bdelta}{\ensuremath{\bar{\delta}}}
\newcommand{\Exp}[1]{\mathcal{E}\left(#1\right)}
\newcommand{\Prob}[1]{(\textbf{P}{#1})}
\def \treq {\stackrel{\tiny \Delta}{=}}
\newtheorem{prop}{Proposition}
\newtheorem{cor}{Corollary}
\newtheorem{remark}{Remark}
\def\ps@IEEEtitlepagestyle{%
  \def\@oddfoot{\mycopyrightnotice}%
  \def\@oddhead{\hbox{}\@IEEEheaderstyle\leftmark\hfil\thepage}\relax
  \def\@evenhead{\@IEEEheaderstyle\thepage\hfil\leftmark\hbox{}}\relax
  \def\@evenfoot{}%
}
\def\mycopyrightnotice{%
  \begin{minipage}{\textwidth}
  \centering \scriptsize
  \copyright~2024 IEEE. Personal use of this material is permitted.  Permission from IEEE must be obtained for all other uses, in any current or future media, including reprinting/republishing this material for advertising or promotional purposes, creating new collective works, for resale or redistribution to servers or lists, or reuse of any copyrighted component of this work in other works.
  \end{minipage}
}
\begin{document}
\bstctlcite{IEEEexample:BSTcontrol}

\title{On the Information Leakage Performance of \\Secure Finite Blocklength Transmissions over Rayleigh Fading Channels}

% Wireless Communications with Finite Blocklength Over Fading Channels: An Average Information Leakage Perspective

 % \author{Author 1, Author 2, Author 3, Author 4, and Author 5

 \author{\IEEEauthorblockN{Milad Tatar Mamaghani\IEEEauthorrefmark{1}, Xiangyun Zhou\IEEEauthorrefmark{1}, Nan Yang\IEEEauthorrefmark{1}, A. Lee Swindlehurst\IEEEauthorrefmark{2},  and H. Vincent Poor\IEEEauthorrefmark{3}}
\IEEEauthorblockA{
\IEEEauthorrefmark{1}School of Engineering, Australian National University, Canberra, ACT 2601, Australia\\
\IEEEauthorrefmark{2}Henry Samueli School of Engineering, University of California, Irvine, CA 92697, USA\\
\IEEEauthorrefmark{3}Department of Electrical Engineering, Princeton University, Princeton, NJ 08544, USA}
Email:~\{\href{mailto:milad.tatarmamaghani@anu.edu.au}{\textcolor{black}{milad.tatarmamaghani}}, \href{mailto:xiangyun.zhou@anu.edu.au}{\textcolor{black}{xiangyun.zhou}}, \href{mailto:nan.yang@anu.edu.au}{\textcolor{black}{nan.yang}}\}@anu.edu.au,   \href{mailto:swindle@uci.edu}{\textcolor{black}{swindle@uci.edu}},
\href{mailto:poor@princeton.edu}{\textcolor{black}{poor@princeton.edu}}
 \thanks{An extended version of this work has been submitted \cite{mamaghani2023performance}.}}

\maketitle

\begin{abstract}
% Abstract: ~150 words
This paper presents a secrecy performance study of a wiretap communication system with finite blocklength (FBL) transmissions over Rayleigh fading channels, based on the definition of an average information leakage (AIL) metric. We evaluate the exact and closed-form approximate AIL performance, assuming that only statistical channel state information (CSI) of the eavesdropping link is available. Then, we reveal an inherent statistical relationship between the AIL metric in the FBL regime and the commonly-used secrecy outage probability in conventional infinite blocklength communications. Aiming to improve the secure communication performance of the considered system, we formulate a blocklength optimization problem and solve it via a low-complexity approach. Next, we present numerical results to verify our analytical findings and provide various important insights into the impacts of system parameters on the AIL. Specifically, our results indicate that i) compromising a small amount of AIL can lead to significant reliability improvements, and ii) the AIL experiences a secrecy floor in the high signal-to-noise ratio regime.
\end{abstract}

\begin{IEEEkeywords}
Beyond-5G communications, physical-layer security, finite blocklength, performance analysis, fading channels.
\end{IEEEkeywords}

\IEEEpeerreviewmaketitle

\section{Introduction}
In today's interconnected world, wireless networks have permeated numerous applications, becoming an indispensable aspect of our lives. From safeguarding private information to powering essential services such as credit data, e-health records, or vital command and control messages, the significance of beyond-5G (B5G) wireless networks underscores the paramount need for robust security measures \cite{Wu2018survey}. Thus, communication security has received considerable attention from both academia and industry, particularly at the physical layer.   Physical-layer security (PLS) is a promising security candidate that exploits the distinct features and randomness of the transmission medium such as channel impairments, noise, or smart signaling to protect wireless transmissions. PLS can enhance confidentiality, decrease reliance on upper-layer cryptography, and ensure security without the need for sophisticated key-exchange procedures \cite{Poor2017}. 

Conventional PLS developments have been centered around the idea of \textit{secrecy capacity}, which refers to the maximum achievable secure rate that guarantees both reliability and confidentiality over a wiretap channel. Wyner in  \cite{Wyner1975} showed that by employing so-called wiretap coding, it is possible to concurrently minimize the decoding error probability at a legitimate receiver and reduce the information leakage to a malicious adversary to an arbitrarily low level with an infinitely long coding blocklength. Nevertheless, emerging scenarios in B5G  wireless systems such as machine-type communication (MTC) require data traffic characterized by short packets in order to satisfy the broader communication requirements \cite{Akyildiz2020,  Bockelmann2016}. The use of short packets calls for finite block-length (FBL) analysis, as the traditional asymptotic analysis no longer holds in this regime. While leveraging FBL communication helps minimize end-to-end transmission latency due to a reduction in the number of channel uses, it generally comes with a decrease in channel coding gain, making it challenging to ensure communication reliability as well as secrecy. In addition, since wireless FBL communication scenarios cannot be accommodated by traditional PLS designs, which rely on the infinite blocklength (IBL) assumption, it is crucial to meticulously develop PLS schemes tailored to the specific requirements of the FBL regime.

Recently, some research has explored the limitations of FBL transmissions from different perspectives via information-theoretic approaches. Polyanskiy \textit{et al.} in \cite{Polyanskiy2010} addressed the problem of maximizing the channel coding rate in the FBL regime with given reliability constraints in general communication channels.  This work urged the research community to further explore the characterization of non-asymptotic achievable rate regions in different non-security-based schemes \cite{Mary2016c} and security-based scenarios \cite{Wang2019e, Chen2020b, mamaghani2023secure, Feng2022}, to determine the practical impacts of FBL on wireless communications. In particular,  the authors in \cite{Wang2019e} studied secure FBL communication for mission-critical Internet of Things (IoT) applications with an external eavesdropper.  The work in \cite{Chen2020b} investigated secrecy performance in cognitive IoT with low-latency and security requirements. In  \cite{mamaghani2023secure}, the design of a secure aerial communication system with FBL was considered to improve the average secrecy rate while meeting security and reliability requirements. In \cite{Feng2022}, the authors presented an analytical framework to investigate the average secrecy throughput in the FBL regime.

% Contributions
Despite the aforementioned research efforts, there are still some fundamental issues that need to be addressed when designing a secure communication system operating with FBL transmissions. For instance, the information leakage performance of such systems with fading channels has been less reported in the existing literature, e.g., \cite{Wang2019e, Chen2020b, mamaghani2023secure, Feng2022, Zheng2020, oh2023joint}. Nevertheless, legitimate users do not have direct control over the amount of information leakage because the instantaneous channel state information (CSI) for the eavesdropping links is often unknown in fading channels. Information leakage is essentially a random quantity that fluctuates depending on the eavesdropper's channel quality. Therefore, it is necessary to conduct further research in order to establish a statistical measure of information leakage in the FBL regime. This work presents an analytical framework to address the problem of information leakage performance through FBL transmissions over fading channels, which is pivotal for guiding practical designs. We validate the theoretical findings via simulations, and also design the coding blocklength for system secrecy performance improvement via different approaches. In particular, we obtain a closed-form analytical expression for the optimal blocklength over Rayleigh fading channels under a constraint on the maximum allowable AIL.

\section{System Model and Assumptions}\label{sec:sysmodel}

% \begin{figure}[!t]
% \centering
% \includegraphics[width=0.7\columnwidth]{}
% \caption{Illustration of wiretap FBL communication system model.}
% \label{fig1}
% \end{figure}

Consider a typical downlink wiretap communication system, where an access point (Alice) communicates with an intended receiver (Bob) via confidential FBL transmissions while an adversary (Eve) attempts to wiretap the ongoing legitimate communication. We assume that each node in the network is equipped with a single antenna. We also assume that communication channels undergo quasi-static  Rayleigh block fading, where the channel coefficients remain unchanged over the duration of one FBL packet transmission within the channel coherence time, and their amplitudes are independent and identically distributed (i.i.d.) according to a Rayleigh distribution from one packet to another. We denote $h_b$  and $h_e$ as the reciprocal complex-valued Alice-Bob and Alice-Eve channels, respectively, considering both large-scale attenuation and small-scale fading. In this work, we assume that the CSI for the main link, $h_b$, is perfectly known to both Alice and Bob via channel reciprocity and training. We also assume that Eve can obtain the instantaneous CSI for her channel, $h_e$. However, the instantaneous CSI of $h_e$ is unavailable to Alice and Bob, and they are only aware of the statistics of $h_e$, due to Eve's use of passive wiretapping.

Alice generates the transmit signal as a unit-power waveform $s$, i.e.,  $\E\{\|s\|^2\}=1$, where $\E\{\cdot\}$ indicates the expectation operator, and broadcasts it with transmit power $P$ over $N$ channel uses.  Assuming that communication channels are corrupted with additive white Gaussian noise (AWGN),  the received signal at node $i$, denoted by $y_i$, can be described as
\begin{align}\label{yj}
    y_i = \sqrt{P}  h_i s  + \xi_i, \quad i\in \{b, e\} 
\end{align}
where  $h_i$ indicates the $i$-th channel coefficient such that $\E\{\|h_i\|^2\}=\mu_i$, and $\xi_i \sim \CN(0, \sigma^2)$ accounts for the AWGN at the $i$-th receiver, where $\CN(\mu, \sigma^2)$ denotes a circularly symmetric complex normal distribution with mean $\mu$ and variance $\sigma^2$. Accordingly,  the received signal-to-noise
ratio (SNR) at the intended receiver $i$ can be obtained as ${\gamma}_i = \rho \|h_i\|^2$, and the transmit SNR is defined as $\rho=\frac{p}{\sigma^2}$, which follows an exponential distribution with mean $\bgamma_i\treq\rho \mu_i$, i.e., $\gamma_i\sim \Exp{\bgamma_i}$ with  
\begin{align}\label{gamma}
        f_{\gamma_i}(x)= \frac{1}{\bgamma_i}\e^{{-\frac{x}{\bgamma_i}}},\quad 
        F_{\gamma_i}(x) = 1-\e^{{-\frac{x}{\bgamma_i}}},~x>0
\end{align}
where $f_{\gamma_i}(\cdot)$ and $F_{\gamma_i}(\cdot)$ represent the probability density function (PDF) and the cumulative distribution function (CDF) of the random variable (r.v.) $\gamma_i$, respectively.

\section{Secrecy Performance}\label{sec:performance}
In this section, we focus on the ergodic secrecy performance of FBL transmissions quantified by the information leakage to Eve, while assuming the legitimate link can accommodate a given desired level of reliability. Furthermore, we aim to establish a simple and intuitive statistical relationship between the secrecy outage formulation widely used in the IBL regime and the variational distance-based information leakage formulation for FBL transmissions. Our objective is to exploit this inherent relationship to guide secure communication designs in the FBL regime when dealing with fading channels. 

\subsection{Average Information Leakage}
Unlike the conventional IBL regime, both decoding errors at Bob and information leakage to Eve may occur in the FBL regime, which leads to a further loss of communication reliability and secrecy. Accordingly, for the considered FBL communication system, the achievable secrecy rate in bits-per-channel-use (bpcu), while sustaining a desired decoding error probability $\varepsilon$ at Bob and information leakage $\delta$ to Eve, is approximately given by \cite{Yang2019}
\begin{align}\label{sp_secrate}
    R^*_s \approx \Big[ C_s(\gamma_b, \gamma_e) - \sqrt{\frac{V_b}{N}}\Qi\left(\varepsilon\right) -\sqrt{\frac{V_e}{N}}\Qi\left(\delta\right)\Big]^+,
\end{align}
where $[x]^+=\max\{x,0\}$, and $\Q^{-1}(x)$ is the inverse of the Gaussian Q-function defined as $\Q(x)=\int^{\infty}_{x}\frac{1}{\sqrt{2\pi}}\e^{-\frac{r^2}{2}}dr$. Additionally, $C_s$ is the secrecy capacity in the IBL regime given by
\begin{align}
    C_s(\gamma_b, \gamma_e) = \log_2(1+ \gamma_b) - \log_2(1+\gamma_e),
\end{align}
and $V_i$ in \eqref{sp_secrate} represents the stochastic variation of the $i$-th channel, which can be expressed as
\begin{align}
    V_i = \log^2_2\e~\frac{\gamma_i(\gamma_i+2)}{(\gamma_i+1)^2},\quad i\in \{b, e\}.
\end{align}

Since Alice knows the instantaneous CSI of Bob's but not Eve's channel, the legitimate users have the ability to control the reliability performance ($\varepsilon$) but not the secrecy performance ($\delta$). Accordingly, we assume that the desired reliability for the given $\varepsilon$ can be achieved, and thus focus our efforts on determining the best attainable secrecy performance characterized by $\delta$ for the considered scenario.  To this end, we assume that for each FBL transmission, Alice sends $m$ information bits over $N$ channel uses, such that the secrecy rate is given by $R_s=\frac{m}{N}$ in the considered wiretap system. To obtain the best achievable secrecy performance in terms of information leakage, we set $R_s = R^*_s$, where $R^*_s$ is the achievable FBL secrecy rate given by \eqref{sp_secrate},  and conduct some mathematical manipulations leading to
\begin{align}\label{delta}
\delta =\Q\left(\sqrt{\frac{N}{V_e}}\left[C_s(\gamma_b, \gamma_e)- \sqrt{\frac{V_b}{N}}\Qi(\varepsilon) -\frac{m}{N}\right]\right).
\end{align}
We note that \eqref{delta} essentially indicates the amount of information leakage that would occur given the best possible coding strategy for transmitting the confidential information at the rate of $R_s$ with a decoding error probability of $\varepsilon$ at Bob. In addition, it is evident from \eqref{delta} that $\delta$ is a random quantity determined by the instantaneous SNR of Eve's channel $\gamma_e$. Since only statistical information about Eve's channel is available, it seems reasonable to investigate the ergodic secrecy performance by averaging the information leakage $\delta$ over all realizations of $\gamma_e$, which we refer to as the average information leakage (AIL). The AIL can be defined as
\begin{align}\label{delta_bar}
    \bar{\delta} &= \E\{\delta | h_b\}\nonumber\\
    &\hspace*{-5mm}=\int_{\R^+} \Q\left(\sqrt{\frac{N}{V_e(x)}}\left[\log_2\left(\frac{1+\gamma_b}{1+x}\right) -{R_0}\right]\right)f_{\gamma_e}(x) dx,
\end{align}
where $V_e(x)=\log^2_2\e\frac{x(x+2)}{(x+1)^2}$ and ${R_0}\treq\sqrt{\frac{V_b}{N}}\Qi(\varepsilon)+\frac{m}{N}$.

% \section{Secrecy Performance}\label{sec:performance}
% Our research focuses on the ergodic-base secrecy performance of FBL transmissions regarding information leakage, which is a relatively new metric and not well understood in the existing literature. Moreover,  we aim to extend the foundational results in \cite{Yang2019} by considering fading channels without the knowledge of the instantaneous CSI of Eve's channel. Additionally, it is desirable to establish a simple and intuitive relationship in a statistical sense between the secrecy outage formulation widely used in the IBL regime and the variational distance-based information leakage formulation in the FBL regime. Our objective is to use this relationship to guide or at least simplify secure communication designs in the FBL regime when dealing with fading channels. 

Note that obtaining a closed-form expression for \eqref{delta_bar} is extremely challenging, due to the complicated integration over the composite Gaussian Q-function. In Proposition \ref{prop1}, we derive an approximate expression for  \eqref{delta_bar} by means of Laplace's approximation theorem \cite{laplace}. 
\begin{prop}\label{prop1}
The AIL for the considered wiretap system with Rayleigh fading channels can be approximated as
 \begin{align}\label{deltaApprox}
      \bar{\delta} &\approx  \exp{\left({-\frac{x_0}{\bgamma_e}}\right)},
 \end{align}
where $x_0$ is defined for $x_0\geq 0 $ as 
\begin{align}
    x_0 = \frac{1+\gamma_b}{2^{R_0}}-1.
\end{align}
 \end{prop}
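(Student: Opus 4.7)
My plan is to exploit the fact that the Q-function argument in \eqref{delta_bar} vanishes precisely at $x=x_0$ and behaves like a sharp sign change there, which lets Laplace's approximation collapse the integral to a simple tail probability of $\gamma_e$.

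First, I would locate $x_0$ by setting $\log_2((1+\gamma_b)/(1+x))-R_0=0$, which gives $x_0=(1+\gamma_b)/2^{R_0}-1$; the hypothesis $x_0\ge 0$ places $x_0$ in the support of $\gamma_e$. Denoting the Q-argument by $g(x)\treq\sqrt{N/V_e(x)}\big[\log_2((1+\gamma_b)/(1+x))-R_0\big]$, I would then verify the sign pattern $g>0$ on $(0,x_0)$, $g(x_0)=0$, and $g<0$ on $(x_0,\infty)$, so that $\Q(g(x))$ interpolates from near $0$ to near $1$ as $x$ crosses $x_0$, attaining $1/2$ at $x_0$. Because of the $\sqrt{N}$ prefactor, this transition has width of order $1/\sqrt{N}$.

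Next, I would apply a Laplace-type argument. The cleanest route is to rewrite $\Q(g(x))=\int_{g(x)}^{\infty}\phi(t)\,dt$ with $\phi$ the standard Gaussian density and swap the order of integration, obtaining $\bdelta=\int\phi(t)\Pr[g(\gamma_e)\le t]\,dt$. Taylor expanding $g$ around its zero gives $g(x)\approx g'(x_0)(x-x_0)$ with $g'(x_0)=-\sqrt{N/V_e(x_0)}/[(1+x_0)\ln 2]$, and substituting the exponential CDF of $\gamma_e$ from \eqref{gamma} turns the inner probability into $\exp(-(x_0+t/g'(x_0))/\bgamma_e)$. Since $|g'(x_0)|$ scales as $\sqrt{N}$, the $t$-dependent shift is negligible to leading order, the remaining Gaussian integral in $t$ equals unity, and one arrives at $\bdelta\approx\exp(-x_0/\bgamma_e)$. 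Equivalently, one may simply replace $\Q(g(x))$ with the Heaviside step $\mathbbm{1}\{x>x_0\}$, whereupon $\bdelta\approx 1-F_{\gamma_e}(x_0)$, which yields the same formula via \eqref{gamma}.

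The main obstacle I anticipate is quantifying the approximation error so that it is genuinely subleading. Because $V_e(x)$ depends on $x$, the function $g$ is not affine, and one must verify that freezing $V_e$ at $V_e(x_0)$ across the $O(1/\sqrt{N})$ transition region contributes only lower-order corrections. A related subtlety is that $V_e(0)=0$ forces $g(0)=+\infty$, so when $x_0$ is small the transition region abuts the left boundary and the Taylor remainder must be controlled more carefully. The assumption $x_0\ge 0$ excludes the degenerate regime $x_0<0$ in which the stated formula $\exp(-x_0/\bgamma_e)>1$ would be meaningless, and implicitly confines attention to the parameter range in which the Laplace-style expansion is valid.
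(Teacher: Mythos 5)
Your argument is correct and arrives at the paper's formula $\bdelta\approx 1-F_{\gamma_e}(x_0)=\exp(-x_0/\bgamma_e)$, but by a genuinely different route. The paper first integrates by parts, moving the derivative onto $\Q(g(x))$ so that the integrand takes the Laplace form $\Psi(x)\,\e^{-N\Xi(x)}$ with $\Xi(x)=g(x)^2/(2N)$ attaining its minimum value $0$ at $x_0$; it then applies the textbook saddle-point formula, and the prefactors $\Psi(x_0)=\sqrt{N/(\pi x_0(x_0+2))}\,[1-F_{\gamma_e}(x_0)]$ and $\Xi''(x_0)=2/(x_0(x_0+2))$ cancel exactly to leave $1-F_{\gamma_e}(x_0)$. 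You instead keep the Q-function, swap the order of integration, and linearize $g$ at its unique zero, which amounts to the step-function replacement $\Q(g(x))\to\mathbbm{1}\{x>x_0\}$. These are two faces of the same concentration phenomenon (the derivative of a sigmoid of width $O(1/\sqrt{N})$ is an approximate delta at $x_0$), but your version buys two things: it makes the limiting form $1-F_{\gamma_e}(x_0)$ --- and hence the link to the SOP in Corollary 1 --- transparent without relying on the fortuitous cancellation of Laplace prefactors, and via the Gaussian moment generating function it produces an explicit multiplicative correction $\exp\bigl(1/(2g'(x_0)^2\bgamma_e^2)\bigr)=1+O(1/N)$ that quantifies the error. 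The paper's version is more mechanical once the integral is in Laplace form and does not need $g$ to be monotone; in your Fubini step you implicitly require $\{x:g(x)\le t\}$ to be a half-line for moderate $t$, which holds because $|g|$ is of order $\sqrt{N}$ outside an $O(1/\sqrt{N})$ neighborhood of $x_0$, but that one-line justification should be stated. Your closing caveats --- the non-affine $V_e$, the boundary behaviour as $x_0\to 0^{+}$ where $V_e(0)=0$, and the exclusion of $x_0<0$ --- are genuine and apply equally to the paper's own derivation, where they surface as the singularity of $\Psi$ at $x=0$ and the blow-up of $\Xi''(x_0)$.
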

\begin{proof}
Integrating by parts, we rewrite \eqref{delta_bar} equivalently as
\begin{align}\label{delta_bar_CDF}
   \begin{split}
       \bar{\delta} =\int_{\R^+} &\left[1-F_{\gamma_e}(x)\right]\times\\
       &\frac{\partial }{\partial x}\Q\left(\sqrt{\frac{N}{V_e(x)}}\left[\log_2\left(\frac{1+\gamma_b}{1+x}\right)-{R_0}\right]\right) dx,
   \end{split} 
\end{align}
where $F_{\gamma_e}(\cdot)$ indicates the CDF of the r.v. $\gamma_e$ given by \eqref{gamma}. Because the first derivative of the Q-function is   $\frac{\partial \Q(x)}{\partial x} = -\frac{1}{\sqrt{2\pi}}\exp\left(-\frac{x^2}{2}\right)$, applying the chain rule we can express \eqref{delta_bar_CDF} as
\begin{align}\label{laplace}
      \bar{\delta} =  \int_{\R^+} \Psi(x) \e^{-N\Xi(x)} dx,
\end{align}
where the functions $\Psi(x)$ and $\Xi(x)$ are given respectively by
\begin{align}
   \hspace{-3mm} \Psi(x) \hspace{-0.5mm}=\hspace{-0.5mm} \sqrt{\frac{N}{\pi x(x+2)}}\hspace{-0.5mm}\left(\hspace{-0.5mm}1\hspace{-0.5mm}+\hspace{-0.5mm}\frac{\log_2\left(\frac{1+\gamma_b}{1+x}\right)\hspace{-0.5mm}-\hspace{-0.5mm}{R_0}}{x(x+2)\log_2\e}\hspace{-0.5mm}\right)\hspace{-0.5mm}\left[1\hspace{-0.5mm}-\hspace{-0.5mm}F_{\gamma_e}(x)\right],
    \end{align}
    and
    \begin{align}
    \Xi(x) = \frac{\left(\log_2\left(\frac{1+\gamma_b}{1+x}\right)-{R_0}\right)^2}{2V_e(x)}.
\end{align}
Note that $\Xi(x)$ is a twice-differentiable function whose global minimum occurs at $x_0$, and the function $\Psi(x)$ is smooth. Thus, with all the required conditions satisfied for an application of the Laplace (a.k.a. saddle-point) approximation, we can approximate \eqref{laplace} as
\begin{align} \label{delt1}
    \bar{\delta} &\approx \e^{-N\Xi(x_0)}\Psi(x_0) \sqrt{\frac{2\pi}{N\Xi''(x_0)}}.
\end{align}
Finally, computing the values of the corresponding functions in \eqref{delt1} at $x_0$, i.e., $\Xi(x_0)=0$, $\Psi(x_0)=\sqrt{\frac{N}{\pi x_0(x_0+2)}}[1-F_{\gamma_e}(x_0)]$, and $\Xi''(x_0)=\frac{2}{x_0(x_0+2)}$, as well as considering the CDF of the r.v. $\gamma_e$ given by \eqref{gamma}, some algebraic simplifications lead to the approximate expression for $\bdelta$ given by \eqref{deltaApprox}. This completes the proof.
\end{proof}

\begin{remark}
    Based on Proposition \ref{prop1}, we can conclude that increasing the blocklength $N$ reduces the AIL performance $\bdelta$, benefiting communication secrecy at the cost of reducing the secrecy rate $R_s$. The AIL is always non-zero regardless of how much communication power is allocated.
\end{remark}

\subsection{Secrecy Outage vs. Average Information Leakage}
Recall that a secrecy outage event in the conventional IBL regime, where Eve can observe the infinite blocklength, can be characterized as an event where the capacity of the wiretap link $C_e=\log_2(1+\gamma_e)$ becomes greater than the redundancy rate $R_e$, as in \cite{zhou2011rethinking}. Accordingly, the secrecy outage probability (SOP) is formulated as
\begin{align}\label{pso}
P_{so} &= \Pr\{C_e > R_e\} = 1- F_{\gamma_e}\left(2^{R_e}-1\right),\\
& = \exp\left(\frac{1 - 2^{R_e}}{\bgamma_e}\right).
\end{align}
Comparing \eqref{deltaApprox} and \eqref{pso}, we can see that these two expressions are identical for a particular choice of $R_e$. This provides a nice link between the well-known SOP in the IBL regime and the AIL in the FBL regime from a statistical viewpoint, leading to the important corollary stated below.
\begin{cor}
The Laplace/saddle-point approximation of the AIL in the FBL regime given in {\normalfont{Proposition \ref{prop1}}} is equivalent to the SOP in the IBL regime given by \eqref{pso} when the redundancy rate is chosen as
\[R_e = \log_2(1+\gamma_b) -\sqrt{\frac{V_b}{N}}\Qi(\varepsilon)-\frac{m}{N}.\]
\end{cor}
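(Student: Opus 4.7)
The plan is to observe that both expressions on which the corollary rests, namely the Laplace approximation $\bar{\delta}\approx\exp(-x_0/\bar{\gamma}_e)$ from Proposition \ref{prop1} and the closed-form SOP $P_{so}=\exp((1-2^{R_e})/\bar{\gamma}_e)$, are single exponentials in $1/\bar{\gamma}_e$. Since $\bar{\gamma}_e>0$ and the exponential map is injective, the equivalence $\bar{\delta}=P_{so}$ reduces to equating the two exponents, i.e.\ $-x_0=1-2^{R_e}$, and then solving this scalar equation for $R_e$ in terms of the quantities appearing in Proposition \ref{prop1}.

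First, I would substitute the definition $x_0=(1+\gamma_b)/2^{R_0}-1$ into $-x_0=1-2^{R_e}$. The constant terms cancel cleanly, yielding $2^{R_e}=(1+\gamma_b)/2^{R_0}$. Taking $\log_2$ of both sides produces $R_e=\log_2(1+\gamma_b)-R_0$. Finally I would unpack $R_0\triangleq\sqrt{V_b/N}\,\Qi(\varepsilon)+m/N$ to obtain the claimed expression
\begin{align*}
R_e=\log_2(1+\gamma_b)-\sqrt{\frac{V_b}{N}}\,\Qi(\varepsilon)-\frac{m}{N},
\end{align*}
and verify by back-substitution into \eqref{pso} that one recovers exactly \eqref{deltaApprox}.

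There is essentially no analytical obstacle here: the statement is a matching-of-parameters corollary of Proposition \ref{prop1}, and the entire argument is algebraic once the two exponential forms are placed side by side. The only minor point worth noting is the sign/feasibility condition $x_0\geq 0$ already imposed in Proposition \ref{prop1}, which translates into $R_e\leq\log_2(1+\gamma_b)$, i.e.\ a non-negative redundancy region where the SOP interpretation of $\bar{\delta}$ is meaningful; I would mention this briefly to justify that the chosen $R_e$ lies in the admissible range in which $P_{so}$ in \eqref{pso} is well defined as a probability. No further machinery beyond Proposition \ref{prop1} and the closed form \eqref{pso} is required.
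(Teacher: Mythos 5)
Your proposal is correct and follows essentially the same route as the paper, which simply compares the two exponentials $\bar{\delta}\approx\exp(-x_0/\bar{\gamma}_e)$ and $P_{so}=\exp((1-2^{R_e})/\bar{\gamma}_e)$ and matches exponents to read off $2^{R_e}=(1+\gamma_b)/2^{R_0}$. Your added remark that $x_0\geq 0$ corresponds to $R_e\leq\log_2(1+\gamma_b)$ is a sensible, if minor, refinement of the paper's terser presentation.
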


\section{Blocklength  Design for Secure Transmission}\label{sec:problem}
In this section, we turn our focus to the transmission design problem in which we optimize the coding block length $N$ for the considered secure FBL communication system. Our problem is formulated as a multi-objective optimization problem (MOOP), aiming to minimize the AIL $\bdelta$ while maximizing the effective secrecy throughput  (EST) for the considered system, defined as $\mathcal{T}=\frac{(1-\varepsilon)m}{N}$,  given by
\begin{align}\label{opt_prob_original}
\Prob{}:&~ \{\stackrel{}{\underset{N}{\mathrm{max}}}\mathcal{T},~~\stackrel{}{\underset{N}{\mathrm{min}}}\bdelta\}
\nonumber\\
&~~\text{s.t.}~~ 1 \leq N \leq N^{max},~N\in\mathbb{Z}^{+}
\end{align}
where $N^{max}$ is the maximum blocklength determined by the maximum allowed delay of the FBL transmission. Problem \Prob{} leads to an adaptive blocklength design since Alice possesses knowledge of Bob's instantaneous CSI and hence $\gamma_b$; thus it is assumed that she has the ability to adapt the design parameter $N$ in accordance with this information. Note that we can convert \Prob{} into a single-objective optimization problem  (SOOP) by introducing a weighting factor $\lambda$ ($0\leq\lambda \leq 1$), as follows:  
\begin{align}\label{opt_prob_caseII}
\Prob{1}:&~ \stackrel{}{\underset{N}{\mathrm{min}}}\lambda \bdelta -\frac{ (1-\lambda)\mathcal{T}}{m(1-\varepsilon)}
\nonumber\\
&~~\text{s.t.}~~ 1 \leq N \leq N^{max},~N\in\mathbb{Z}^{+}.
\end{align}
The parameter $\lambda$ determines the relative importance of the AIL and the EST, after proper scaling. Here, we scale the EST by the factor $(1-\varepsilon)m$ to make its range comparable to that of the AIL. Note that if we choose $\lambda=0$, then \Prob{1} reduces to an EST maximization regardless of the amount of information leakage. On the other hand, the extreme case $\lambda=1$ minimizes the AIL without regard for the EST. It can be readily seen that for the former, the optimal blocklength is $N^*=1$, while for the latter $N^*=N^{max}$, due to the monotonicity of the respective objective functions. Nevertheless, for the general case $0<\lambda<1$, the objective function is nonconvex, and its optimal solution can be determined via a one-dimensional (1D) search. Note that this scaling approach results in a set of solutions that form a Pareto-optimal boundary, allowing system designers to trade-off the AIL and EST depending on the given circumstances, as shall be shown below.

\subsection{Alternative Low-complexity Approach}
From a practical perspective, the AIL should generally not exceed a certain threshold $\phi$ depending on the secrecy level required for the application of interest. As such, we can reformulate \Prob{} as a SOOP, by bringing the second objective function into the constraint set, as
\begin{subequations}
\begin{align}\label{opt_prob_caseI}
\Prob{2}:&~ \stackrel{}{\underset{N}{\mathrm{max}}}\mathcal{T}
\nonumber\\
&~~\text{s.t.}~~ 1 \leq N \leq N^{max},~N\in\mathbb{Z}^{+}\\
&~~\qquad \bdelta \leq \phi.\label{sec_cst}
\end{align}
\end{subequations}
Note that the value of the objective function in \Prob{2} increases as $N$ decreases, while the AIL $\bdelta$ is monotonically non-decreasing, leading to a tightening of the constraint \eqref{sec_cst}. Thus, the optimal solution to \Prob{2} is the minimum value of $N$ for which the constraint \eqref{sec_cst} is satisfied. As a result, assuming the feasibility of \Prob{2} and using \eqref{deltaApprox}, we can write the optimality condition as
\begin{align}\label{opt_eq}
    \log_2\left(\frac{1+\gamma_b}{1-\bgamma_e \ln \phi}\right) = \frac{\sqrt{V_b}\Qi(\varepsilon)}{\sqrt{N}}+\frac{m}{N},
\end{align}
Applying the change of variable $N=\eta^2$, \eqref{opt_eq} can be transformed into the quadratic equation 
\[a\eta^2-b\eta-m=0,\]
where the coefficients are given respectively by 
\[a=\log_2\left(\frac{1+\gamma_b}{1-\bgamma_e \ln \phi}\right)\textrm{and}~b = \sqrt{V_b}\Qi(\varepsilon).\]
Retaining the positive solution to the quadratic equation leads to the optimal integer-valued blocklength 
\begin{align}
    N^* = \min\left\{\Bigg\lceil \sqrt{\frac{b+\sqrt{b^2+4am}}{2a}}\Bigg\rceil, N^{max}\right\},
\end{align}
where $\lceil x\rceil$ indicates the smallest integer larger than or equal to $x$ and is used to ensure the constraint \eqref{sec_cst} is not violated. Consequently, the optimal EST is given by $\mathcal{T}^*=\frac{(1-\varepsilon)m}{ N^*}$.

\section{Numerical Results and Discussion}\label{sec:simulations}
In this section, we provide simulation results to validate the accuracy of the approximate expressions obtained above for the AIL and investigate the impact of key system parameters such as blocklength, number of confidential bits, transmit power, and decoding error probability on the AIL and EST performance. Unless otherwise stated, the simulation parameters are set as follows: Number of transmit information bits $m=200$, transmit SNR $\rho=0$ dB, decoding error probability $\varepsilon = 10^{-3}$, maximum tolerable AIL $\phi=10^{-2}$,  coding blocklength $N=400$, maximum blocklength $N^{max}=1000$, $\mu_b = 1$, $ \gamma_b=1$, and $\mu_e= 0.1$.

%%%%%%%%%%%%%%%%%%%%%%%%%%%%%%%%%%%
\begin{figure}[t]
  \centering
  \includegraphics[width=\linewidth]{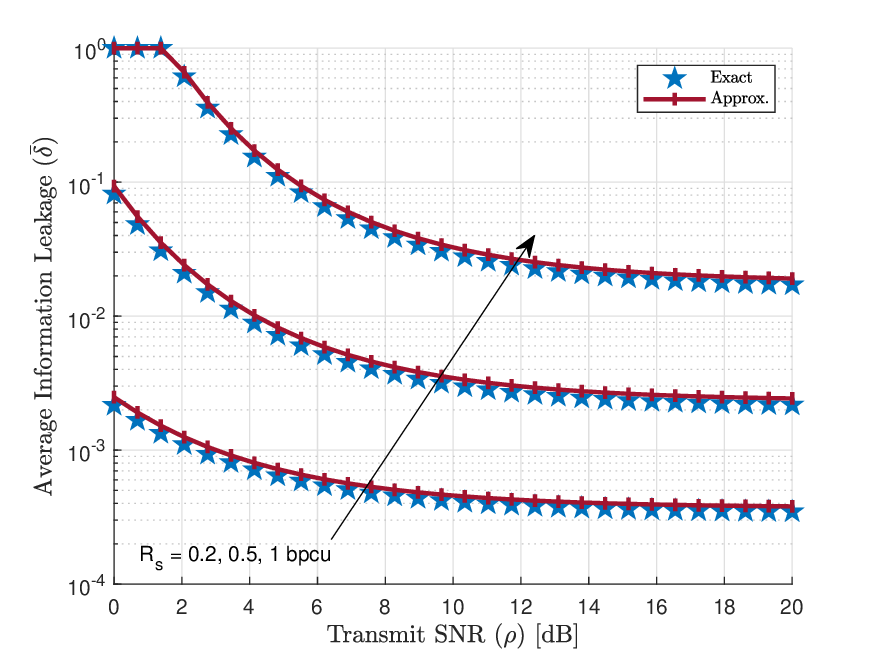}
  \caption{AIL vs. transmit SNR for different $R_s$.}
\label{sim1}
\end{figure}
Fig. \ref{sim1} shows the AIL under Rayleigh fading for a fixed main channel realization and different secrecy coding rates ranging from $R_s=0.2$ bpcu (relatively low) to $R_s=1$ bpcu (relatively high). The markers labeled with \textit{Exact} represent the exact AIL evaluated by \eqref{delta_bar}, while the curves labeled as \textit{Approx.} indicate the approximate analytical evaluation of the AIL in the FBL regime according to Proposition \ref{prop1}. We can see that the theoretical exact and analytical approximate AILs match well for a wide range of transmit SNRs, validating our analysis. In addition, we see that for a given transmit SNR, a higher secrecy rate $R_s$ corresponds to a larger AIL.  Furthermore, the high-SNR performance reveals an AIL floor regardless of $R_s$, 
where the AIL approaches a nonzero value for large $\rho$. This indicates that allocating more resources to increase $\rho$ has a negligible impact on the AIL, signifying the need for proper resource management. Overall, this figure confirms the accuracy of our derived approximate expressions for the AIL and once again reveals the inherent close relationship between the  AIL in the FBL regime and the SOP for IBL.

%%%%%%%%%%%%%%%%%%%%%%%%%%%%%%%%%%%
\begin{figure}[t]
\centering
\includegraphics[width=\columnwidth]{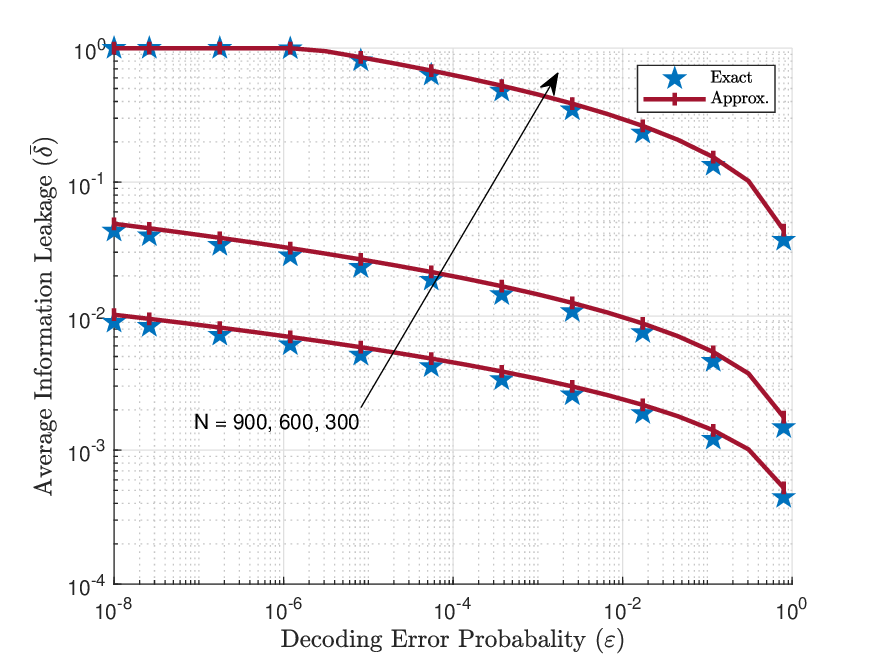}
\caption{AIL vs. decoding error probability for different $N$.}
\label{sim3}
\end{figure}

Fig. \ref{sim3}  provides notable insight into the security-reliability trade-off for the considered secure FBL transmission system. In particular, we plot the AIL against the decoding error probability $\varepsilon$ for different blocklengths. As can be clearly seen from the figure, the exact and approximate AIL are in good agreement for different values of $\varepsilon$, confirming the correctness of our analysis. Furthermore, when $\varepsilon$ increases, the AIL becomes smaller, and this decreasing trend becomes more significant as the coding blocklength is reduced. We can also conclude from the figure that by compromising a small level of secrecy in terms of AIL, a large improvement in reliability is achieved. For example, when $N=900$, if the secrecy level is reduced from $\bdelta = 2 \times 10^{-3}$ to $\bdelta =10^{-2}$, which corresponds to roughly $7$ dB security loss, the communication reliability is improved from $\varepsilon= 10^{-2}$ to $\varepsilon=10^{-8}$, i.e., approximately a $60$ dB gain in reliability. This is particularly beneficial for systems that require ultra-high reliability and can tolerate a somewhat reduced level of secrecy.

%%%%%%%%%%%%%%%%%%%%%%%%%%%%%%%%%%%
\begin{figure}[t]
\centering
\includegraphics[width=\columnwidth]{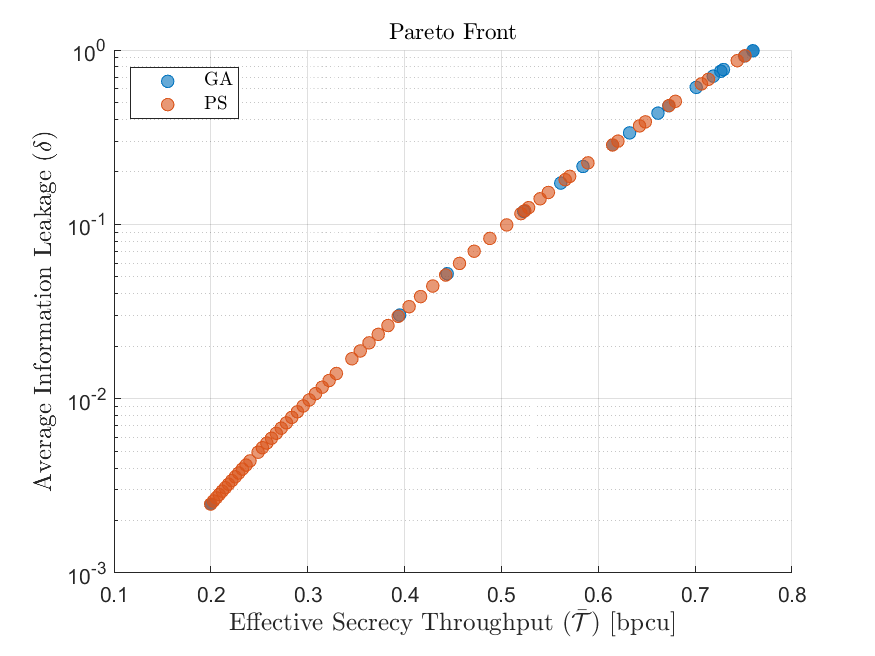}
\caption{Pareto boundary for the MOOP given by \Prob{}.}
\label{sim7}
\end{figure}
Fig. \ref{sim7} shows the Pareto boundary for the MOOP problem, illustrating how improvement in one objective results in degradation for the other. 
Variations of two algorithms are employed to solve \Prob{} via the Matlab Optimization Toolbox \cite{OptimizationToolbox}, namely the genetic algorithm (GA) and the direct/Pareto search (PS) algorithm. Both algorithms perform well in finding the optimal points. PS finds a more concentrated set of points on the boundary, while the GA solutions occur more at the extreme ends of the performance metrics. Overall, this figure provides useful insight into the set of optimal AIL and EST values and their trade-offs. 

%%%%%%%%%%%%%%%%%%%%%%%%%%%%%%%%%%%
\begin{figure}[t]
\centering
\includegraphics[width=\columnwidth]{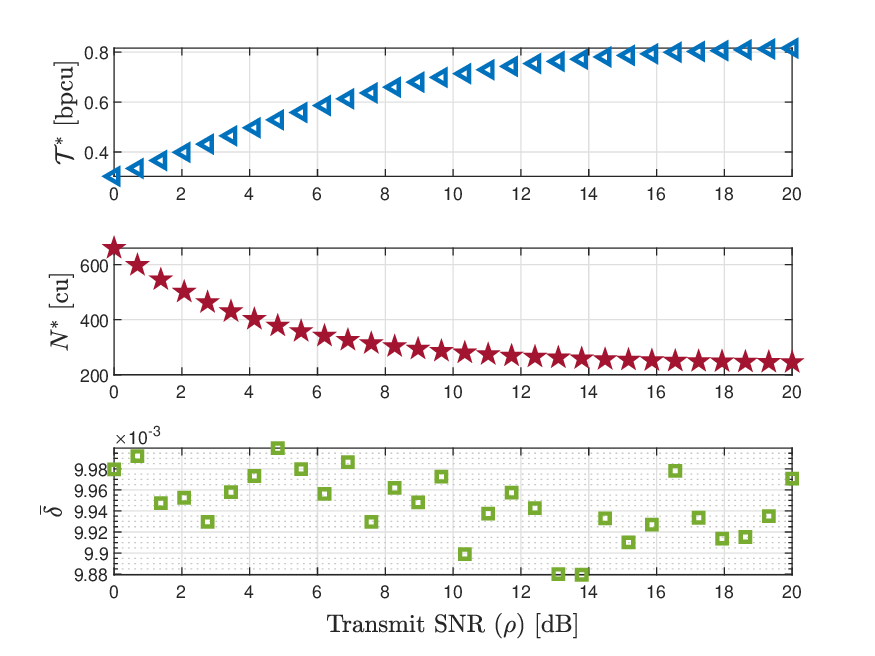}
\caption{Designed EST and blocklength vs. transmit SNR using \Prob{1}.}
\label{sim5}
\end{figure}

Fig. \ref{sim5} depicts the optimal EST and designed blocklength $N^*$, as well as the obtained AIL for different transmit SNRs. It is clear from the figure that the optimal blocklenth is monotonically increasing with transmit SNR $\rho$, yet experiences a ceiling phenomenon at large SNRs due to the maximum transmission delay requirement corresponding to $N^{max}$. Furthermore,  we observe from Fig. \ref{sim5} that the larger the transmit SNR, the smaller the number of channel uses that is optimally required for EST improvement while satisfying the AIL security requirement.     
%%%%%%%%%%%%%%%%%%%%%%%%%%%%%%%%%%%
\begin{figure}[t]
\centering
\includegraphics[width=\columnwidth]{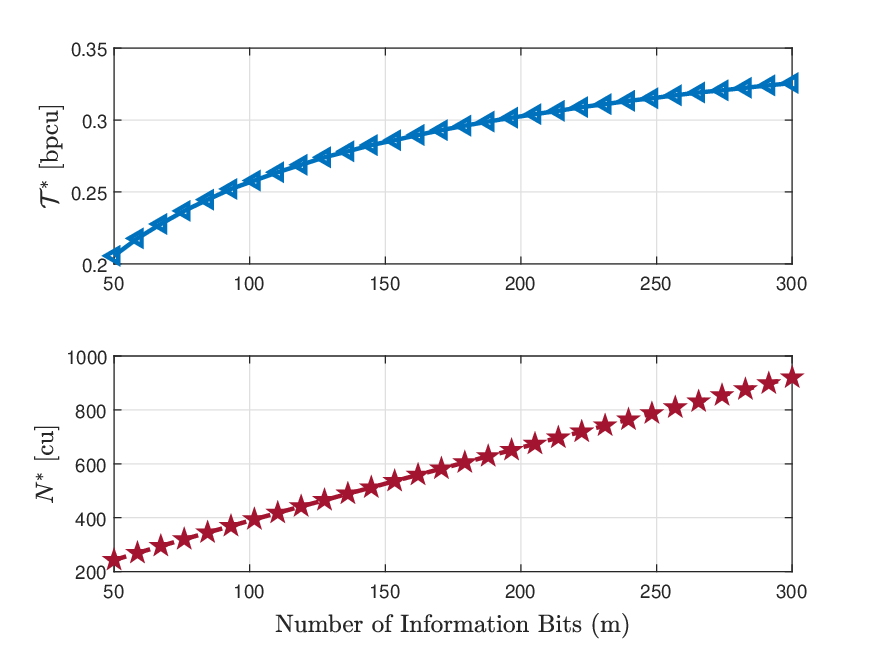}
\caption{Designed EST and blocklength vs. the number of information bits via \Prob{1}.}
\label{sim6}
\end{figure}

Fig. \ref{sim6} exhibits the impact of the number of transmit information bits $m$ on the designed parameter $N^*$  as well as the optimal EST $\bar{\mathcal{T}}^*$ according to \Prob{1}. We observe from the figure that as $m$ increases, $N^*$ grows linearly to improve the EST for maintaining the demanded secrecy level. Furthermore, the overall EST performance $\bar{\mathcal{T}}^*$ is non-decreasing with $m$; however, due to the maximum blocklength threshold $N^{max}$, a ceiling phenomenon on the optimal EST appears when $m$ becomes sufficiently large.

% \begin{figure}[t]
% \centering
% \includegraphics[width=\columnwidth]{Figures/optimAST_Rayl_SNR.eps}
% \caption{Optimal AST, blocklength, and power allocation factor vs. transmit SNR via adaptive design for Rayleigh fading with AN beamforming scheme.}
% \label{sim6}
% \end{figure}
% In Fig. \ref{sim6}, the non-adaptive AST design for Rayleigh fading with AN beamforming via a joint optimization of blocklength and power allocation factor for different transmit SNRs is explored. For the sake of comparison fairness in simulation, we consider fixed instantaneous Bob's channel $\|\h_b[i]\|^2\approx k~\forall i=1, 2, \cdots, L$ such that transmit SNR becomes the sole influencing parameter of $\tilde{\gamma}_b$, based on which 
% the parameters $N$ and $\alpha$ are optimized adaptively. 

% We can observe from the figure that as the transmit SNR increases, the optimal blocklength $N^*$ tends to be decreasing, leading to the overall AST improvement, which implies that a larger transmit SNR, despite improving the quality of both main and eavesdropping links, leads to a better AST performance while satisfying the required security level with less coding blocklength. In addition, as the transmit SNR grows, optimal power allocation $\alpha^*$ generally tends to reduce, allocating less power to the transmit confidential information while stronger AN make the eavesdropping link further confused to maintain the determined security level.

\section{Conclusion}\label{sec:conclusion}
This work has explored the performance analysis and optimization of secure FBL transmissions in terms of the AIL metric for a wiretap communication system without instantaneous knowledge of Eve's CSI.  We have obtained expressions for the exact and approximate AIL of the system assuming Rayleigh fading channels. We have unveiled the relationship between the AIL in the FBL regime and the SOP in the IBL regime and found a simple relationship between these two statistical metrics. We have further investigated the impact of key system parameters on the AIL performance. Our findings have revealed that increasing blocklength reduces the system's AIL, and sacrificing a small level of secrecy could lead to a substantial improvement in reliability. Finally, we have formulated and solved a blocklength optimization problem to improve the secrecy performance of the system using different optimization approaches, obtaining worthwhile insights into the practical design of PLS with FBL transmissions.

% Use section* for acknowledgment
\section*{Acknowledgment}
This work was supported by the Australian Research Council’s Discovery Projects (project number DP220101318).

\bibliographystyle{IEEEtran}
\bibliography{RefList}

\end{document}